\theoremstyle{plain} \newtheorem{theorem}{Theorem}[section]
\newtheorem{lemma}[theorem]{Lemma}
 \theoremstyle{definition}
\newtheorem{definition}[theorem]{Definition} \theoremstyle{remark}
\newcommand{\C}{\mathbb{C}}
\def\({\left(}
\def\){\right)}
\def\<{\left\langle}
\def\>{\right\rangle}
\numberwithin{equation}{section}
\begin{document}

\title{Weak limit theorem for a nonlinear quantum walk}

\author {Masaya Maeda, Hironobu Sasaki, Etsuo Segawa, Akito Suzuki, Kanako Suzuki}

\maketitle

\begin{abstract}
This paper continues the study of large time behavior 
of a nonlinear quantum walk begun in \cite{MSSSS}. 
In this paper, we provide a weak limit theorem 
for the distribution of the nonlinear quantum walk. 
The proof is based on the scattering theory of 
the nonlinear quantum walk and 
the limit distribution is obtained in terms of its asymptotic state. 
\end{abstract}

\section{Introduction}

This paper continues the study of a one-dimensional 
nonlinear quantum walk (NLQW)
begun in \cite{MSSSS}, where we developed a scattering theory 
for NLQW. 
The model treated there 
covers  a nonlinear optical Galton board \cite{NPR07PRA, MDB15PRE},
a quantum walk with a feed-forward quantum coin \cite{SWH14SR}, 
nonlinear discrete dynamics 
\cite{LKN15PRA},
and a model exhibiting topological phenomena \cite{GTB16PRA}.  
For more details on earlier works, 
we refer to the previous paper\cite{MSSSS}. 
In a forthcoming companion paper \cite{MSSSSnum}, 
we numerically study a solitonic behavior of NLQW. 
In this paper, we study a weak limit theorem (WLT) for NLQW.  
The WLT for the one-dimensional (linear) 
quantum walk (QW)
was first found by Konno \cite{Kon02}, proved in \cite{Kon05},
and then generalized by several authors 
\cite{EEKST16, EKOS17JPA, FFS18, GJS04PRE, HiKoSaSe14, HiSe17,
KLS13, MaSe15, RiSuTi17b, Suzuki16QIP}.  
The WLT states that
\begin{center}
$\displaystyle \frac{X_t}{t}$ converges in law to a random variable $V$
as $t \to \infty$,
\end{center}
where $X_t$ is a random variable denoting the position of a quantum walker at time 
$t = 0, 1, 2, \ldots$. 
Because $X_t/t$ is the average velocity of the walker,
$V$ is interpreted as the asymptotic velocity of the walker and 
hence WLT well describes the asymptotic behavior of the walker. 
Here, the probability distribution of $X_t$ 
is naturally defined according to Born's rule as
\[ P(X_t = x) = \|\Psi_t(x)\|_{\mathbb{C}^2}^2,
	\quad x \in \mathbb{Z}, \]
where 
$\Psi_t$ is the state of the walker at time $t$,
which is in the state space 
 $\mathcal{H} :=  l^2(\mathbb{Z};\mathbb{C}^2)$. 
The state evolution is governed by 
\[ \Psi_{t+1}(x) = P(x+1) \Psi_t(x+1) + Q(x-1) \Psi_t(x-1),
	\quad x \in \mathbb{Z}, \ t = 0, 1, 2, \ldots,  \] 
where $P(x)$ and $Q(x) \in M(2; \mathbb{C})$ satisfy
$P(x) + Q(x) = : C(x) \in U(2)$. 
More precisely, the state at time $t$ is given by 
$\Psi_t = U_{\rm L}^t \Psi_0$,
where $\Psi_0$ is the initial state,
which is a normalized vector in $\mathcal{H}$,
and $U_{\rm L}$ is the evolution operator defined as follows.  
Let $\hat C$ be the coin operator defined as the multiplication by $C(x)$ and 
$S$ be the shift operator, 
i.e., $(\hat C\Psi)(x) := C(x) \Psi(x)$ and 
$(S\Psi)(x) := \ ^{\rm t}(\Psi_1(x+1), \Psi_2(x-1))$
($x \in \mathbb{Z}$)
for $\Psi = \ ^{\rm t}(\Psi_1, \Psi_2) \in \mathcal{H}$. 
The evolution operator $U_{\rm L}$ 
is a unitary operator defined as $U_{\rm L}=S \hat C$. 
As shown in \cite{Suzuki16QIP},
if $C(x) = C_0 + O(|x|^{-1-\epsilon})$ with some $C_0 \in U(2)$
and $\epsilon > 0$ independent of $x$,
then WLT is proved and the limit distribution $\mu_V$ is 
expressed  in terms of the wave operator  
$W_+ := \mbox{s-}\lim_{t \to \infty}
	 U_{\rm L}^{-t}U_0^t \Pi_{\rm ac}(U_0)$,
where $U_0 = S \hat C_0$ and 
$\Pi_{\rm ac}(U_0)$ is the projection onto the subspace of 
absolute continuity. 
See also \cite{EEKST16, RiSuTi17a, RiSuTi17b} for anisotropic cases. 

In the case of NLQW, 
the dynamics is governed by 
\begin{equation}
\label{NLev} 
u(t+1, x)
	= (\hat Pu(t))(x+1) + (\hat Qu(t))(x-1),
	\quad x \in \mathbb{Z}, \ t = 0, 1, 2, \ldots,  
\end{equation} 
where 
$t \mapsto u(t) := u(t, \cdot) \in \mathcal{H}$ is 
in $l^\infty(\mathbb{N} \cup \{0\}; \mathcal{H})$. 
$\hat P$ and $\hat Q$ are nonlinear maps on $\mathcal{H}$
and give a norm preserving nonlinear map $\hat C:\mathcal{H}
\ni u \mapsto \hat Cu := \hat Pu + \hat Qu$.
Although the dynamics is similar to the linear quantum walk,
it does not define a quantum system. 
However, 
\begin{equation}
\label{pdX} 
p_t(x) :=  \|u(t,x)\|_{\mathbb{C}^2}^2,
	\quad x \in \mathbb{Z}. 
\end{equation}
defines a probability distribution. 
Indeed, 
similarly to the linear quantum walk,
the dynamics \eqref{NLev} is expressed as
$u(t+1, \cdot) = U u(t, \cdot)$, 
$t = 0, 1, 2, \ldots$,  
where 
$U:=S\hat C$ is a nonlinear map on $\mathcal{H}$. 
Because $U$ preserves the norm, 
\eqref{pdX} defines the probability distribution
provided that the initial state  $u(0, \cdot) = u_0 \in \mathcal{H}$ 
is a normalized vector. 
We use $X_t$ to denote
the random variable that follows \eqref{pdX},
{\it i.e.}, $P(X_t = x) = p_t(x)$.   
Of course $X_t$ never describes the position of a walker
that occupies any single position in $\mathbb{Z}$,  
but we dare to call $X_t$ the position of a nonlinear quantum walker
in analogy with the linear quantum walk. 
It is mathematically more convenient 
to study the limit behavior of $X_t$ than the distribution $p_t(x)$ itself.

In this paper, we consider a nonlinear coin given by
\[ (\hat C u)(x) 
	= C_{\rm N}(g|u_1(x)|^2,g|u_2(x)|^2) u(x), 
	\quad x \in \mathbb{Z}
	\quad \mbox{for $u = \ ^{\rm t}(u_1, u_2) \in \mathcal{H}$}, \]
where $g > 0$ controls the strength of the nonlinearity and
$C_{\rm N}:[0,\infty) \times [0,\infty) \ni (s_1, s_2)
	\mapsto C_{\rm N}(s_1, s_2) \in U(2)$ 
with $C_{\rm N}(0,0) =: C_0 \in U(2)$.  
As was shown in \cite{MSSSS}, 
in the weak nonlinear regime,
$U(t)u_0$ scatters, {\it i.e.},  
$\lim_{t \to \infty} \|u(t, \cdot) - U_0^t u_+\|_{\mathcal{H}} = 0$
with some asymptotic state $u_+ \in \mathcal{H}$. 
The aim of this paper is to establish 
WLT for $X_t$ that follows \eqref{pdX}
and prove that the limit distribution is given by
\[ \mu_V(dv) = w(v) f_{\rm K}(v;|a|) dv, \]
where $w(v)$ is a function expressed in terms of $u_+$ and 
$f_{\rm K}(v;r)$ is the Konno function ($r>0$).

The rest of this paper is organized as follows. 
Sec. 2 is devoted to reviewing the results of \cite{MSSSS}. 
We state our main results in Sec. 3
and give proofs in Sec. 4. 
 
\section{Preliminaries}
In this section, we review the definition of NLQW 
and results obtained in \cite{MSSSS}. 
Throughout this paper, we set 
$\mathcal{H} = l^2(\mathbb{Z};\mathbb{C}^2)$
and drop the subscript $\mathcal{H}$ in the norm and inner product
when there is no ambiguity.
Let  
\[ C_{\rm N}:[0, \infty) \times [0,\infty) \rightarrow U(2) \]
satisfy
\begin{align*}
C_0 := C_{\rm N}(0,0) =\begin{pmatrix} a & b \\-\bar b & \bar a \end{pmatrix}\quad \mbox{with $|a|^2+|b|^2=1$ and $0<|a|<1$.} 
\end{align*}
We define a nonlinear coin operator $\hat C$ as
\begin{align} 
\label{coin} 
(\hat C u)(x) = C_{\rm N}(g |u_1(x)|^2, g |u_2(x)|^2) u(x),
\quad x \in \mathbb{Z}
	\quad \mbox{for $u = \ ^{\rm t}(u_1,u_2) \in \mathcal{H}$},
\end{align}
where $g>0$ is a constant that controls the strength of the nonlinearity. 
Let $u_0 \in \mathcal{H}$ be the initial state of a walker 
with $\|u_0\|=1$. 
The state $u(t)$ of the walker at time $t =1, 2, \ldots$
is defined by induction as follows. 	
\[ u(0) = u_0, \quad u(t+1) = U u(t), \quad t = 0, 1,2, \ldots,  \]
where $U = S \hat C$. 
We then define a nonlinear evolution operator $U(t)$ as 
\[ U(t) u_0 = u(t). \]	
Similarly, we define a linear coin operator $\hat C_0$ as
$(\hat C_0 u)(x) = C_0 u(x)$
and set $U_0=S \hat C_0$. 
By scattering, we mean the following:

\begin{definition}\label{def:scatters}
We say $U(t)u_0$ scatters if there exists $u_+\in \mathcal H$ such that
\[ \lim_{t \to \infty} \|U(t)u_0 - U_0^t u_+\|=  0. \]
\end{definition}

We use $U_{g=1}(t)$ to denote the evolution operator $U(t)$
that has the nonlinear coin $\hat C$ defined in \eqref{coin} 
with $g=1$. 
As mentioned in the previous paper \cite{MSSSS}, 
the smallness of $\|u_0\|_\mathcal{H}$ and $\|u_0\|_{l^1}$ 
corresponds to the smallness of $g$,
because 
\[ U(t)  u_0 = \frac{1}{\sqrt{g}} U_{g=1}(t) v_0
	\quad \mbox{with $v_0 := \sqrt{g} u_0$}.  \]
Thus, the result in \cite{MSSSS} is reformulated as follows. 
We use $\|A\|_{\C^2\to \C^2}$ tot denote the operator norm of the matrix $A$,
{\it i.e.}, $\|A\|_{\C^2\to \C^2}:=\sup_{v\in \C^2, \|v\|_{\C^2}=1}\|Av\|_{\C^2}$.

\begin{theorem}[\cite{MSSSS}]\label{thm:scat}
Assume that $C_{\rm N} \in C^1(\Omega;U(2))$ with some domain 
$\Omega$ including $[0,\infty) \times [0,\infty)$ and there exists 
$c_0>0$ and $m \geq 2$ such that 
\ $\|C_{\rm N}(s_1,s_2)-C_0\|_{\C^2\to \C^2}\leq c_0  (s_1+s_2)^{m}$ and  $\|\partial_{s_j}C_{\rm N}(s_1,s_2)\|_{\C^2\to \C^2}\leq c_0  (s_1+s_2)^{m-1}$ for $j=1$ or $2$. 
Let $u_0 \in \mathcal{H}$ be a normalized vector. 
Suppose in addition that 
either of the following conditions holds: (1) $m \geq 3$; 
(2) $m=2$ and $u_0 \in l^1(\mathbb{Z}, \mathbb{C}^2)$. 
Then $U(t)u_0$ scatters if $g$ is sufficiently small. 
\end{theorem}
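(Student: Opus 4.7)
The plan is to treat NLQW as a small perturbation of the free linear evolution $U_0$ and run a small-data scattering argument modeled on what is standard for nonlinear dispersive PDE. Writing $\hat C = \hat C_0 + (\hat C - \hat C_0)$ and setting $N(u) := S(\hat C - \hat C_0)u$, the recursion $u(t+1) = S\hat C u(t)$ becomes $u(t+1) = U_0 u(t) + N(u(t))$. By discrete Duhamel,
\[ u(t) = U_0^t u_0 + \sum_{s=0}^{t-1} U_0^{t-1-s} N(u(s)), \]
and since $U_0$ is unitary on $\mathcal H$, defining $u_+ := u_0 + \sum_{s=0}^{\infty} U_0^{-1-s} N(u(s))$ reduces scattering to the absolute summability $\sum_{s=0}^{\infty} \|N(u(s))\|_\mathcal{H} < \infty$. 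The hypothesis on $C_{\rm N}$ gives the pointwise bound $\|N(u)(x)\|_{\C^2} \leq c_0 g^m \|u(x)\|_{\C^2}^{2m+1}$, hence
\[ \|N(u)\|_\mathcal{H} \leq c_0 g^m \|u\|_{l^{4m+2}(\Z;\C^2)}^{2m+1}, \]
so everything reduces to controlling $\sum_s \|u(s)\|_{l^{4m+2}}^{2m+1}$.

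The heart of the argument is the dispersive/Strichartz machinery for the free evolution $U_0$. Since $U_0$ is translation-invariant, Fourier transform diagonalizes it into multiplication by a unitary $2 \times 2$ matrix $\widehat{U_0}(\xi)$ on $\T$ with analytic band eigenvalues $e^{\pm \im \omega_\pm(\xi)}$. Stationary phase applied to the resulting oscillatory integrals (with $0<|a|<1$ ruling out band degeneracies) yields
\[ \|U_0^t \phi\|_{l^\infty(\Z;\C^2)} \lesssim (1+|t|)^{-1/3}\|\phi\|_{l^1(\Z;\C^2)}, \]
the rate $1/3$ coming from isolated inflection points of the group-velocity maps $\omega_\pm'$. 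Interpolating with the trivial $l^2 \to l^2$ bound and applying the standard duality argument produces discrete Strichartz inequalities for both $U_0^t$ and the inhomogeneous Duhamel operator.

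With the linear estimates secured I would close the nonlinear argument by contraction on a ball in a mixed-norm Strichartz space $\cM$ adapted to the exponent $4m+2$; the nonlinear estimate takes the schematic form $\|N(u)\|_{\cM^*} \lesssim g^m \|u\|_\cM^{2m+1}$, where $\cM^*$ denotes the Strichartz-dual space in which the inhomogeneous Duhamel term is controlled. For $m \geq 3$ the power $2m+1 \geq 7$ is large enough that a plain $l^2$-Strichartz estimate combined with conservation of $\|u(s)\|_{l^2}$ closes the argument, with smallness of $g$ absorbing the fixed factor $\|u_0\|_\mathcal{H} = 1$; in the borderline case $m=2$, $2m+1=5$ sits exactly at the endpoint where plain Strichartz leaves no integrability room, and the extra hypothesis $u_0 \in l^1$ is used to invoke the pointwise decay $\|U_0^t u_0\|_{l^\infty} \lesssim (1+t)^{-1/3}\|u_0\|_{l^1}$ directly, which restores the summability needed to seed the bootstrap.

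The main obstacle is entirely on the linear side: proving the sharp $t^{-1/3}$ dispersive decay uniformly in $\phi$. This requires a careful stationary phase analysis of the $2 \times 2$ band symbols on $\T$, locating the inflection points of $\omega_\pm$ and ensuring the resulting constants are uniform in $t$ and $x$; this is what distinguishes the quantum-walk estimate from its continuous Schr\"odinger analog. Once this input is in hand the Strichartz inequalities and the nonlinear fixed point are routine, and the case split in the hypothesis is dictated purely by the integrability of the nonlinear power at the critical exponent $m=2$.
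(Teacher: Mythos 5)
Your proposal has to be measured against the companion paper \cite{MSSSS}, since the present paper only quotes Theorem \ref{thm:scat} and does not reprove it. At the level of architecture you are on target and essentially aligned with that reference: the discrete Duhamel expansion $u(t)=U_0^tu_0+\sum_{s=0}^{t-1}U_0^{t-1-s}N(u(s))$, the reduction of scattering to $\sum_s\|N(u(s))\|_{\mathcal H}<\infty$ via unitarity of $U_0$, the pointwise bound $\|N(u)(x)\|_{\C^2}\le c_0g^m\|u(x)\|_{\C^2}^{2m+1}$ coming from $\|C_{\rm N}-C_0\|\le c_0(s_1+s_2)^m$, and the $\langle t\rangle^{-1/3}$ dispersive estimate for $U_0$ from degenerate stationary phase (the points where the phase has vanishing second but nonvanishing third derivative, i.e.\ the critical points of the group velocity) are all the right ingredients. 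For case (1) the accounting also works: the admissible pair $(q,r)=(6,\infty)$ for decay rate $1/3$ gives $\sum_t\|U_0^tf\|_{l^\infty}^6\lesssim\|f\|_{l^2}^6$, and $2m\ge 6$ is exactly $m\ge 3$, which explains that threshold.

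The genuine gap is in case (2). You claim that $u_0\in l^1$ together with $\|U_0^tu_0\|_{l^\infty}\lesssim\langle t\rangle^{-1/3}\|u_0\|_{l^1}$ ``restores the summability needed to seed the bootstrap,'' but the bootstrap you are implicitly running does not close: with $M:=\sup_t\langle t\rangle^{1/3}\|u(t)\|_{l^\infty}$ one has $\|N(u(s))\|_{l^1}\le c_0g^2\|u(s)\|_{l^\infty}^3\|u(s)\|_{l^2}^2\le c_0g^2M^3\langle s\rangle^{-1}$, and the Duhamel contribution to $\|u(t)\|_{l^\infty}$ is $\sum_{s<t}\langle t-1-s\rangle^{-1/3}\langle s\rangle^{-1}\sim\langle t\rangle^{-1/3}\log\langle t\rangle$, so the scheme returns strictly less decay than it assumes and iteration only worsens the logarithm. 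The same criticality reappears in any $l^q_tl^\infty_x$ reformulation: closing would require $u\in l^3_tl^\infty_x$, which neither the $(6,\infty)$ Strichartz bound nor the $\langle t\rangle^{-1/3}$ pointwise decay supplies, and the exponents generated by Young's inequality converge to the excluded endpoint $q=3$. In short, $m=2$ is exactly critical for the bare $l^1\to l^\infty$ estimate, and an additional input is needed --- for instance the refined kernel bounds $|K_t(x)|\lesssim t^{-1/2}$ inside the light cone and $t^{-1/3}$ only on the caustic region of width $O(t^{1/3})$ near $|x|=|a|t$, which yield $\|U_0^t\|_{l^1\to l^5}\lesssim\langle t\rangle^{-4/15}$ (strictly better than interpolation) and allow a contraction in a space such as $l^5_tl^5_x$ where $\sum_t\|N(u(t))\|_{l^2}\le c_0g^2\|u\|_{l^5_tl^5_x}^5$ closes the argument. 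As written, your treatment of the endpoint case is the step that would fail.
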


\section{Weak limit theorem}
Our aim is to establish the weak limit theorem for 
the position $X_t$ of a walker at time $t$ that follows 
the probability distribution
\begin{equation}
\label{Xt_dist} 
P(X_t = x) = \|u(t,x)\|_{\mathbb{C}^2}^2,  \quad x \in \mathbb{Z}, 
\end{equation} 
where $u(t, \cdot) := U(t)u_0$ with $\|u_0\|_\mathcal{H}=1$.
By Theorem \ref{thm:scat}, if $g$ is sufficiently small,
then $U(t)u_0$ scatters, {\it i.e.}, there exists $u_+ \in \mathcal{H}$
such that  
\[ 
\lim_{t \to \infty} \|U(t)u_0 - U_0^t u_+\|= 0.
\]   
Let $\hat v_0$ be the asymptotic velocity operator
for $U_0 = S \hat C_0$, which is a unique self-adjoint operator
such that 
\begin{equation}
\label{vlc} 
e^{i \xi \hat v_0} 
= \mbox{s-}\lim_{t \to +\infty} e^{i \xi \hat x_0(t)/t}. 
\end{equation} 
Here $\hat x_0(t) = U_0^{-t} \hat x U_0$ is 
the Heisenberg operator of the position operator $\hat x$. 
See \cite{GJS04PRE,Suzuki16QIP} for more details. 
We give the precise definition of $\hat v_0$ in \eqref{def:vlc}. 
We use $E_{A}(\cdot)$ to denote the spectral projection of 
a self-adjoint operator $A$. 
\begin{theorem}[weak limit theorem]
\label{p_wlt}
Let $X_t$, $\hat v_0$, and $u_+$ be as above. 
Then there exists a random variable $V$
such that $X_t/t$ converges in law to $V$,
whose distribution $\mu_V$ is given by
\[ \mu_V(dv) =  d \|E_{\hat v_0} (v) u_+\|^2. \]
\end{theorem}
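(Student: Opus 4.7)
The strategy is to pass through characteristic functions and apply L\'evy's continuity theorem. For each fixed $\xi\in\R$, observe that
\[ E[e^{i\xi X_t/t}] = \sum_{x\in\Z} e^{i\xi x/t}\|u(t,x)\|_{\C^2}^2 = \langle u(t), e^{i\xi \hat x/t} u(t)\rangle, \]
where $\hat x$ is the position operator on $\mathcal{H}$. It will suffice to show that this converges, as $t\to\infty$, to $\int_{\R} e^{i\xi v}\,d\|E_{\hat v_0}(v) u_+\|^2$, which is the characteristic function of $\mu_V$ and is continuous at $\xi=0$; the conclusion then follows from L\'evy's theorem.

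First I would use scattering to replace $u(t)$ by $U_0^t u_+$ inside the inner product. Norm preservation of $U$ gives $\|u(t)\|=1$, and unitarity of $U_0$ combined with $\|U(t)u_0-U_0^t u_+\|\to 0$ gives $\|u_+\|=1$. Since $e^{i\xi\hat x/t}$ is unitary on $\mathcal{H}$, the standard telescoping
\[ \bigl|\langle u(t), e^{i\xi\hat x/t} u(t)\rangle - \langle U_0^t u_+, e^{i\xi\hat x/t} U_0^t u_+\rangle\bigr| \leq 2\,\|u(t)-U_0^t u_+\| \]
shows, via Theorem \ref{thm:scat}, that the two expressions share the same limit.

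Next, using unitarity of $U_0$ and functional calculus, rewrite
\[ \langle U_0^t u_+, e^{i\xi\hat x/t} U_0^t u_+\rangle = \langle u_+, e^{i\xi \hat x_0(t)/t} u_+\rangle, \]
where $\hat x_0(t)=U_0^{-t}\hat x U_0^{t}$. The defining relation \eqref{vlc} yields the strong limit $e^{i\xi\hat x_0(t)/t} u_+ \to e^{i\xi \hat v_0} u_+$ in $\mathcal{H}$, so by continuity of the inner product and the spectral theorem,
\[ \langle u_+, e^{i\xi\hat x_0(t)/t} u_+\rangle \to \langle u_+, e^{i\xi\hat v_0} u_+\rangle = \int_{\R} e^{i\xi v}\, d\|E_{\hat v_0}(v) u_+\|^2, \]
which closes the argument.

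The analytic substance --- long-time control of the nonlinear evolution --- is already encoded in Theorem \ref{thm:scat}, so the remaining work is essentially bookkeeping with the spectral theorem and is not expected to present serious obstacles. The only mild subtlety is the functional-calculus identity $U_0^{-t} f(\hat x) U_0^{t} = f(\hat x_0(t))$ with $f(y) = e^{i\xi y/t}$, which is a routine conjugation of a self-adjoint operator by a unitary but should be invoked explicitly. Because no additional regularity on $u_+$ (such as $l^{1}$-decay) enters, the conclusion applies throughout the entire scattering regime furnished by Theorem \ref{thm:scat}.
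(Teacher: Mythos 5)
Your proposal is correct and follows essentially the same route as the paper: the paper's Lemma \ref{lem_A1} performs exactly your replacement of $U(t)u_0$ by $U_0^t u_+$ (via a three-term estimate equivalent to your telescoping bound) and then invokes \eqref{vlc}, and the proof of Theorem \ref{p_wlt} concludes with the spectral theorem and L\'evy continuity just as you do. No substantive differences to report.
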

In what follows, 
we provide an explicit formula for the density function of $\mu_V$ obtained in Theorem \ref{p_wlt}. 
To this end, we proceed along the lines of \cite{RiSuTi17b}. 
Let $f_{\rm K}$ be the Konno function defined for all $r > 0$ as
\[ f_{\rm K}(v;r) = \begin{cases} 
	\frac{\sqrt{1-r^2}}{\pi(1-v^2)\sqrt{r^2 -v^2}},
		& |v| < r, \\
		0, & |v| \geq 0. 
		\end{cases} \]
Similarly to \cite{RiSuTi17b}, we introduce operators 
\[ K_{j,m}: \mathcal{H}
\to \mathcal{G}
	:= L^2([-|a|, |a|], f_{\rm K}(v;|a|)dv/2),
	\quad
j=1,2, \ m=0,1 \]
as follows. 
Let $\mathbb{T}:= \mathbb{R}/2\pi \mathbb{Z}$. 
Because $U_0$ is translation invariant, 
it can be decomposed by the Fourier transformation 
$F: \mathcal{H} 
\to L^2(\mathbb{T}; \mathbb{C}^2; dk/2\pi)$
and the Fourier transform $F U_0 F^{-1}$ is the multiplication operator by
\[ \hat U_0(k) 
= \begin{pmatrix} 
	e^{ik} a & e^{ik}b \\
	- e^{-ik} \bar b & e^{-ik} \bar a
	 \end{pmatrix} \in U(2), 
	 \quad k \in \mathbb{T}.  \]
We use $\varphi_j(k)$ to denote the normalized eigenvectors
of $\hat U_0(k)$ corresponding to the eigenvalues 
\[ \lambda_j(k) 
	= |a| \cos (k + \theta_a) 
		+ i (-1)^{j-1} \sqrt{|b|^2 + |a|^2 \sin (k+\theta_a)},
			\quad j = 1, 2.  \] 
Let 
$k_{k,m}: [-|a|, |a|] \to I_m := [\pi(m-1/2) - \theta_a, \pi(m+1/2) -\theta_a]$
be a function defined as
\[ 
k_{j,m}(v) 
= - \theta_a + m \pi 
	+\arcsin \left( \frac{(-1)^{j+m} |b| v}{|a|\sqrt{1-v^2}} \right),
\quad
j=1,2, \ m=0,1,
\]
where $\theta_a \in [0,2\pi)$ is the argument of $a$. 
By direct calculation, $k_{j,m}$ is differentiable in $(-|a|, |a|)$
and
\[ \frac{d}{dv} k_{j,m} = (-1)^{j+m} \pi f_{\rm K}(v, |a|). \]
We now define the operators $K_{j,m}$ as
\[ (K_{j,m}u)(v) 
	= \langle \varphi_j(k_{j,m}(v)), 
		\hat u (k_{j,m}(v)) \rangle_{\mathbb{C}^2},
	\quad v \in [-|a|, |a|], \]
where $\hat u$ is the Fourier transform of $u \in \mathcal{H}$. 

\begin{theorem}
\label{thm:dist}
Let $u_+$ and $V$ be  as Theorem \ref{p_wlt}.
Then 
\[ \mu_V(dv) = w(v) f_{\rm K}(v,|a|) dv, \]
where
\[ w(v) = \frac{1}{2} 
	\sum_{j=1,2} \sum_{m=0,1} 
		|(K_{j,m} u_+)(v)|^2,
		\quad v \in [-|a|, |a|].   \]
\end{theorem}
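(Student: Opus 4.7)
The plan is to begin from Theorem \ref{p_wlt}, which reduces the problem to computing the spectral measure $d\|E_{\hat v_0}(v)u_+\|^2$ explicitly, and then to diagonalise $U_0 = S\hat C_0$ via the Fourier transform $F$. Since $U_0$ is translation invariant, $F U_0 F^{-1}$ is multiplication by $\hat U_0(k)$, whose pointwise spectral decomposition $\hat U_0(k) = \sum_{j=1,2}\lambda_j(k)\,|\varphi_j(k)\rangle\langle \varphi_j(k)|$ is already recorded in the text. The first step is to expand $\hat u_+(k) = c_1(k)\varphi_1(k) + c_2(k)\varphi_2(k)$ with $c_j(k) = \langle \varphi_j(k), \hat u_+(k)\rangle_{\mathbb{C}^2}$, so that the task of computing $\|E_{\hat v_0}(B)u_+\|^2$ for a Borel set $B \subset \mathbb{R}$ reduces to understanding how $\hat v_0$ acts on each spectral band of $U_0$.

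The crucial step is to identify $\hat v_0$ as a multiplication operator in the Fourier picture. Writing $\lambda_j(k) = e^{iE_j(k)}$ and using that $\hat x$ corresponds to $i\partial_k$ under $F$ while the spectral projections $P_j(k) = |\varphi_j(k)\rangle\langle\varphi_j(k)|$ commute with $\hat U_0(k)$, a direct computation from the defining relation \eqref{vlc} shows that $F\hat v_0 F^{-1}$ is block-diagonal in the $P_j(k)$ decomposition and acts on the $j$-th band as multiplication by the group velocity $v_j(k) = -E_j'(k)$. Differentiating the explicit formula for $\lambda_j(k)$ gives
\[ v_j(k) = (-1)^{j}\,\frac{|a|\sin(k+\theta_a)}{\sqrt{|b|^2+|a|^2\sin^2(k+\theta_a)}}, \]
and the hypothesis $0<|a|<1$ ensures that $v_j$ maps each of the two intervals $I_0,I_1$ monotonically onto $[-|a|,|a|]$, with local inverses precisely the functions $k_{j,m}(v)$ of the statement.

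Combining these ingredients, for any Borel $B \subset [-|a|,|a|]$,
\[ \|E_{\hat v_0}(B)u_+\|^2 \;=\; \sum_{j=1,2}\int_{\{k\in\mathbb{T}:\,v_j(k)\in B\}}|c_j(k)|^2\,\frac{dk}{2\pi} \;=\; \sum_{j,m}\int_{B}|c_j(k_{j,m}(v))|^2\,\frac{|dk_{j,m}/dv|}{2\pi}\,dv, \]
where in the second equality the preimage in $\mathbb{T}$ is split along $I_0\cup I_1$ and the change of variables $v = v_j(k)$ is applied on each branch. Inserting the identity $|dk_{j,m}/dv| = \pi f_{\rm K}(v,|a|)$ already computed in the excerpt, and recognising $c_j(k_{j,m}(v)) = (K_{j,m}u_+)(v)$ by definition of $K_{j,m}$, the four terms combine into
\[ \mu_V(dv) \;=\; \frac{1}{2}\sum_{j=1,2}\sum_{m=0,1}|(K_{j,m}u_+)(v)|^2\, f_{\rm K}(v,|a|)\,dv, \]
with the factor $\tfrac{1}{2}$ arising from $\pi/(2\pi)$.

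The main obstacle is the identification of $\hat v_0$ with multiplication by the group velocity on each spectral band: this is the quantum-walk counterpart of a classical dispersion theorem and requires the analyticity of $\lambda_j$ together with separation of the two bands, both guaranteed by $0<|a|<1$. The critical points of $v_j$, where $dk_{j,m}/dv$ blows up, form a finite subset of $\mathbb{T}$ (corresponding to $v = \pm|a|$) and thus contribute nothing to the spectral integrals, so the change of variables is valid off a null set. Once the band decomposition is in place, the scheme is essentially that of \cite{RiSuTi17b} for the linear translation-invariant case, with the asymptotic state $u_+$ of the nonlinear dynamics playing the role that $\Pi_{\rm ac}(U_0)\Psi_0$ does in the linear setting.
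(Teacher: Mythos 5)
Your argument is correct and follows essentially the same route as the paper: Theorem \ref{p_wlt} reduces everything to the spectral measure of $\hat v_0$ applied to $u_+$, and the density is extracted by splitting $\mathbb{T}$ into $I_0\cup I_1$ and changing variables $v=v_j(k)$ with Jacobian $|dk_{j,m}/dv|=\pi f_{\rm K}(v;|a|)$. The only difference is one of packaging: the paper encapsulates this change of variables in Lemma \ref{newex} (stated without proof, with a pointer to \cite{RiSuTi17b}) and then works with characteristic functions, whereas you carry out the computation directly on Borel sets, in effect supplying the proof of that lemma.
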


\section{Proofs of Theorems}

The proofs of Theorem \ref{p_wlt} and \ref{thm:dist} proceed along the same lines
as that of \cite{Suzuki16QIP}[Corollary 2.4]. 
We suppose that $\|u_0\|_\mathcal{H}=1$. 
Let $\hat x$ be the position operator. 
By \eqref{Xt_dist}, the characteristic function of $X_t/t$
is given by
\begin{equation}
\label{chrct_X_t} 
\mathbb{E}\left(e^{i \xi X_t/t} \right)  
	= \left\langle U(t) u_0, 
		e^{i \xi \hat x/t} U(t) u_0 \right\rangle,
		\quad \xi \in \mathbb{R}, 
\end{equation}
where $\mathbb{E}(X)$ denotes the expectation value of 
a random variable $X$.  
The asymptotic velocity operator $\hat v_0$ in \eqref{vlc}
is defined via the Fourier transform: 
$F v_0 F^{-1}$ is the multiplication operator by 
\begin{equation}
\label{def:vlc} \hat v_0(k) 
	= \sum_{j=1,2} v_j(k) |\varphi_j(k) \rangle \langle \varphi_j(k)|,
	\quad k \in \mathbb{T}, 
\end{equation}
where 
\[ v_j(k) := \frac{i}{\lambda_j(k)} \frac{d}{dk} \lambda_j(k) 
	= \frac{(-1)^j |a| \sin (k + \theta_a)} 
		{\sqrt{|b|^2 + \sin^2(k+\theta_a)}}. \]
As was shown in \cite{RiSuTi17b},
$v_j: I_m
\to [-|a|, |a|]$ 
is the inverse function of $k_{j,m}$.

\begin{lemma}
\label{lem_A1}
\[ \lim_{t \to \infty}
	\langle U(t) u_0, e^{i \xi \hat x/t} U(t) u_0 \rangle
	= \langle u_+, e^{i \xi \hat v_0} u_+ \rangle \]
\end{lemma}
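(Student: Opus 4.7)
The plan is to reduce the nonlinear claim to a known linear fact about the asymptotic velocity operator $\hat{v}_0$, using the scattering statement as the bridge.

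First I would use scattering to eliminate the nonlinear evolution from the inner product. Since $\|U(t)u_0 - U_0^t u_+\| \to 0$ and $e^{i\xi \hat x/t}$ is unitary (hence has operator norm $1$ uniformly in $t$), a standard Cauchy--Schwarz estimate yields
\[
\bigl| \langle U(t) u_0, e^{i \xi \hat x/t} U(t) u_0 \rangle - \langle U_0^t u_+, e^{i \xi \hat x/t} U_0^t u_+ \rangle \bigr| \leq \bigl(\|U(t)u_0\| + \|U_0^t u_+\|\bigr)\, \|U(t)u_0 - U_0^t u_+\|,
\]
which vanishes as $t \to \infty$. So it suffices to analyze the linear quantity $\langle U_0^t u_+, e^{i\xi \hat x/t} U_0^t u_+\rangle$.

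Next, because $U_0$ is unitary, I would move $U_0^t$ across the inner product to obtain
\[
\langle U_0^t u_+, e^{i \xi \hat x/t} U_0^t u_+ \rangle = \langle u_+, e^{i\xi \hat x_0(t)/t} u_+\rangle,
\]
where $\hat x_0(t) := U_0^{-t} \hat x U_0^t$ is the Heisenberg position operator (here I use $U_0^{-t} e^{i\xi \hat x/t} U_0^t = e^{i\xi \hat x_0(t)/t}$ via the functional calculus for the self-adjoint operator $\hat x_0(t)/t$).

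Finally, by the defining property \eqref{vlc} of the asymptotic velocity operator $\hat v_0$, one has the strong convergence $e^{i\xi \hat x_0(t)/t} u_+ \to e^{i\xi \hat v_0} u_+$ in $\mathcal{H}$ as $t \to \infty$. Inserting this into the inner product (which is continuous in its second argument with the first fixed) gives
\[
\lim_{t \to \infty} \langle u_+, e^{i\xi \hat x_0(t)/t} u_+\rangle = \langle u_+, e^{i\xi \hat v_0} u_+\rangle,
\]
completing the proof. No step presents a genuine obstacle: the scattering hypothesis does all the nonlinear work up front, and the remainder is the classical identification of $\hat v_0$ as the Cesaro-type limit of the Heisenberg position. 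The only point requiring slight care is confirming that the limit in \eqref{vlc} is compatible with our Heisenberg convention $U_0^{-t} \hat x U_0^t$, which follows directly from the spectral definition \eqref{def:vlc} and the translation invariance of $U_0$.
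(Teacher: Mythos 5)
Your proposal is correct and follows essentially the same route as the paper: both arguments first use the scattering relation together with norm preservation to replace $\langle U(t)u_0, e^{i\xi\hat x/t}U(t)u_0\rangle$ by $\langle U_0^t u_+, e^{i\xi\hat x/t}U_0^t u_+\rangle$ (the paper organizes this as two cross terms $I_1, I_2$ via the triangle inequality, you as a single Cauchy--Schwarz bound), and then invoke the defining strong-limit property \eqref{vlc} of $\hat v_0$ for the remaining linear term. Your extra remark identifying the intermediate quantity as $\langle u_+, e^{i\xi\hat x_0(t)/t}u_+\rangle$ is exactly what the paper's appeal to \eqref{vlc} implicitly uses, so there is no substantive difference.
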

\begin{proof}
A direct calculation yields
\begin{align*}
& |\langle U(t) u_0, e^{i \xi \hat x/t} U(t) u_0 \rangle
	- \langle u_+, e^{i \xi \hat v_0} u_+ \rangle| \\
& \quad \leq |\langle U(t) u_0 - U_0^t u_0,
		 e^{i \xi \hat x/t} U(t) u_0 \rangle| 
	+ |\langle U_0^t u_0, 
		e^{i \xi \hat x/t}( U(t) u_0 - U_0^t u_0) \rangle| \\
& \qquad + |\langle U_0^t u_0, e^{i \xi \hat x/t}  U_0^t u_0\rangle
		- \langle u_+, e^{i \xi \hat v_0} u_+ \rangle  \rangle| \\
& \quad =: I_1(t) + I_2(t) + I_3(t).
\end{align*}
Because $e^{i\xi \hat x/t}$ and $U(t)$ preserve the norm
and $U(t)u_0$ scatters, 
$\lim_{t \to \infty} I_1(t)= \lim_{t \to \infty} I_2(t)=0$. 
By \eqref{vlc}, $\lim_{t \to \infty}I_3(t)=0$. 
Hence the proof is completed. 
\end{proof}
\begin{proof}[Proof of Theorem \ref{p_wlt}]
By \eqref{chrct_X_t} and Lemma \ref{lem_A1},
\begin{equation}
\label{lim_chrct}
\lim_{t \to \infty} E(e^{i \xi X_t/t})
	= \langle u_+, e^{i \xi \hat v_0} u_+ \rangle
	= \int_{[-|a|, |a|]} e^{i \xi v} d \|E_{\hat v_0}(v) u_+\|^2,
\end{equation}
where we have used the spectral theorem. 
The right-hand side in the above equation is equal to 
the characteristic function of a random variable $V$
following the probability distribution 
$\mu_V =  \|E_{\hat v_0}(\cdot) u_+\|^2$. 
This completes the proof of Theorem \ref{p_wlt}. 
\end{proof}

In what follows, we prove Theorem \ref{thm:dist}. 
The following lemma is proved similarly to \cite{RiSuTi17b}. 
\begin{lemma}
\label{newex}
We use $\hat G$ to denote the multiplication operator on $\mathcal{G}$ 
by a Borel function $G:[-|a|,|a|] \to \mathbb{C}$. 
Then
\[ G(\hat v_0) = \sum_{j=1,2} \sum_{m=0,1}
	K_{j,m}^* \hat G K_{j,m}. \]
\end{lemma}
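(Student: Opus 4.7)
The plan is to verify the identity on quadratic forms. For $u \in \mathcal{H}$, I would compute $\langle u, G(\hat v_0) u\rangle$ in Fourier variables and match it with $\sum_{j,m} \langle K_{j,m} u, \hat G K_{j,m} u\rangle_{\mathcal{G}}$; the operator identity then follows by polarization, first on the dense subspace where the integrals converge absolutely and then by a standard density argument for general Borel $G$.

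Passing to the Fourier side via $F$, the spectral decomposition of $\hat v_0(k)$ given in \eqref{def:vlc} and the functional calculus applied fiberwise yield
\[
\langle u, G(\hat v_0) u\rangle = \sum_{j=1,2} \int_{\mathbb{T}} G(v_j(k))\, |\langle \varphi_j(k), \hat u(k)\rangle_{\mathbb{C}^2}|^2 \frac{dk}{2\pi}.
\]
I would then split $\mathbb{T} = I_0 \cup I_1$. As recalled just before the lemma statement, $v_j$ restricted to $I_m$ is a bijection onto $[-|a|,|a|]$ with inverse $k_{j,m}$. Substituting $k = k_{j,m}(v)$ and using the Jacobian identity $|dk_{j,m}/dv| = \pi f_{\rm K}(v;|a|)$, the integral of the $j$-th summand over $I_m$ becomes
\[
\int_{-|a|}^{|a|} G(v)\, |(K_{j,m}u)(v)|^2\, \frac{f_{\rm K}(v;|a|)}{2}\, dv
= \langle K_{j,m} u, \hat G K_{j,m} u\rangle_{\mathcal{G}},
\]
where the factor $1/2$ appears as $\pi f_{\rm K}(v;|a|)/(2\pi)$ and matches the measure $f_{\rm K}(v;|a|)\,dv/2$ placed on $\mathcal{G}$.

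Summing over the four pairs $(j,m)\in\{1,2\}\times\{0,1\}$ and polarizing in $u$ gives the claimed operator identity. I do not expect a genuine obstacle: every ingredient (the fiberwise spectral decomposition, the fact that $k_{j,m}$ is the inverse of $v_j|_{I_m}$, and the Jacobian formula) is already available in the paper. The only point worth highlighting is the bookkeeping of multiplicities: each $v \in (-|a|,|a|)$ has two preimages under $v_j$, one in $I_0$ and one in $I_1$, and the factor $1/2$ in the measure on $\mathcal{G}$ is precisely what compensates for this two-fold covering.
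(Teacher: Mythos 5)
Your proof is correct and is essentially the argument the paper has in mind: the paper gives no proof of this lemma, stating only that it ``is proved similarly to \cite{RiSuTi17b}'', and the argument intended there is exactly your fiberwise spectral decomposition of $\hat v_0(k)$ followed by the change of variables $k=k_{j,m}(v)$ on each half-period $I_m$, using that $k_{j,m}$ inverts $v_j|_{I_m}$ and that $|dk_{j,m}/dv|=\pi f_{\rm K}(v;|a|)$. Your bookkeeping of the factor $1/2$ --- arising as $\pi f_{\rm K}/(2\pi)$ and matching both the measure $f_{\rm K}(v;|a|)\,dv/2$ on $\mathcal{G}$ and the two-fold covering $\mathbb{T}=I_0\cup I_1$ --- is the one delicate point, and you have it right.
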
 
\begin{proof}[Proof of Theorem \ref{thm:dist}]
It suffices to prove 
\begin{equation}
\label{wldist} 
\langle u_+, e^{i \xi \hat v_0} u_+ \rangle 
= \int_{[-|a|, |a|]} e^{i \xi v} w(v) f_K(v;|a|)dv,
	\quad \xi \in \mathbb{R}. 
\end{equation}
Let $G(v) = e^{i \xi v}$. 
By Lemma \ref{newex}, the left-hand side of \eqref{wldist} is
\begin{align*} 
\langle u_+, e^{i \xi \hat v_0} u_+ \rangle 
& = \sum_{j=1,2} \sum_{m=0,1}
	 \left\langle K_{j,m}u_+, 
	 	\hat G K_{j,m} u_+ \right\rangle_{\mathcal{G}}. 
\end{align*}
Because 
\[ \left\langle K_{j,m}u_+, 
	 	\hat G K_{j,m} u_+ \right\rangle_{\mathcal{G}} 
= \int_{[-|a|, |a|]} e^{i \xi v}
	| (K_{j,m} u_+)(v) |^2
	 f_{\rm K}(v;|a|) dv/2,  \]
the proof of theorem is complete. 
\end{proof}

\section*{Acknowledgments}  
M.M. was supported by the JSPS KAKENHI Grant Numbers JP15K17568, JP17H02851 and JP17H02853.
H.S. was supported by JSPS KAKENHI Grant Number JP17K05311.
E.S. acknowledges financial support from 
the Grant-in-Aid for Young Scientists (B) and of Scientific Research (B) Japan Society for the Promotion of Science (Grant No.~16K17637, No.~16K03939).
A. S. was supported by JSPS KAKENHI Grant Number JP26800054. 
K.S acknowledges JSPS the Grant-in-Aid for Scientific Research (C) 26400156.

\medskip

Masaya Maeda, Hironobu Sasaki

Department of Mathematics and Informatics,
Faculty of Science,
Chiba University,
Chiba 263-8522, Japan

{\it E-mail Address}: {\tt maeda@math.s.chiba-u.ac.jp, sasaki@math.s.chiba-u.ac.jp}

\medskip

Etsuo Segawa

Graduate School of Information Sciences, 
Tohoku University,
Sendai 980-8579, Japan

{\it E-mail Address}: {\tt e-segawa@m.tohoku.ac.jp}

\medskip

Akito Suzuki

Division of Mathematics and Physics,
Faculty of Engineering,
Shinshu University,
Nagano 380-8553, Japan

{\it E-mail Address}: {\tt akito@shinshu-u.ac.jp}

\medskip

Kanako Suzuki

College of Science, Ibaraki University,
2-1-1 Bunkyo, Mito 310-8512, Japan

{\it E-mail Address}: {\tt kanako.suzuki.sci2@vc.ibaraki.ac.jp}


\begin{thebibliography}{10}









  
\bibitem{EEKST16}
S.~Endo, T.~Endo, N.~Konno, E.~Segawa, and M.~Takei,
\newblock Weak limit theorem of a two-phase quantum walk with one defect,
\newblock {\em Interdiscip. Inform. Sci.} J-STAGE Advance,
Graduate School of Information Sciences, Tohoku University, 2016.
 
 
\bibitem{EKOS17JPA}
T.~Endo, N.~Konno, H.~Obuse, and E.~Segawa, \emph{Sensitivity of
  quantum walks to a boundary of two-dimensional lattices: approaches based on
  the cgmv method and topological phases}, Journal of Physics A: Mathematical
  and Theoretical \textbf{50} (2017), no.~45, 455302.
 
\bibitem{FFS18}
T. Fuda, D. Funakawa, and A. Suzuki,
 Weak limit theorem for a one-dimensional split-step quantum
walk, in preparation.


\bibitem{GTB16PRA}
Y.~Gerasimenko, B.~Tarasinski, and C.~W.~J. Beenakker, \emph{Attractor-repeller
  pair of topological zero modes in a nonlinear quantum walk}, Phys. Rev. A
  \textbf{93} (2016), 022329.






\bibitem{GJS04PRE}
G.~Grimmett, S.~Janson, and P.~F.~Scudo, \emph{Weak limits for
  quantum random walks}, Phys. Rev. E \textbf{69} (2004), 026119.




\bibitem{HiKoSaSe14}
Yu. Higuchi, N. Konno, I. Sato, and E. Segawa,
Spectral and asymptotic properties of Grover walks on crystal lattices,
{\em J. Funct. Anal.} {\bf 267} (2014) 4197--4235.

\bibitem{HiSe17}
Yu. Higuchi and E. Segawa,
The spreading behavior of quantum walks induced 
by drifted random walks on some magnifier graph,
{\em Quantum Inf. and Comput.} {\bf 17} (2017) 0399--0414.



\bibitem{KLS13}
N.~Konno, T.~{\L}uczak, and E.~Segawa,
\newblock Limit measures of inhomogeneous discrete-time quantum walks in one dimension,
\newblock {\em Quantum Inf. Process.} 12(1): 33--53, 2013.






\bibitem{Kon02}
N.~Konno,
\newblock Quantum random walks in one dimension,
\newblock {\em Quantum Inf. Process.} 1(5): 345--354, 2002.

\bibitem{Kon05}
N.~Konno,
\newblock A new type of limit theorems for the one-dimensional quantum random walk,
\newblock {\em J. Math. Soc. Japan} 57(4): 1179--1195, 2005.



\bibitem{LKN15PRA}
C.~-W.~Lee, P.~Kurzy\ifmmode~\acute{n}\else \'{n}\fi{}ski, and
  H.~Nha, \emph{Quantum walk as a simulator of nonlinear dynamics:
  Nonlinear dirac equation and solitons}, Phys. Rev. A \textbf{92} (2015),
  052336.
 
\bibitem{MaSe15}  
T. Machida and E. Segawa,
Trapping and spreading properties of quantum walk 
in homological structure,
{\em Quantum Inf. Process.} {\bf 14} (2015) 1539--1558.  

\bibitem{MSSSS}
M.~ Maeda, H.~Sasaki, E.~ Segawa, A.~ Suzuki and K.~ Suzuki, 
\emph{Scattering and inverse scattering for nonlinear quantum walks}, arXiv:1801.03214.

\bibitem{MSSSSnum}
M.~ Maeda, H.~Sasaki, E.~ Segawa, A.~ Suzuki and K.~ Suzuki, \emph{Dynamics of nonlinear quantum walks}, in preparation.








\bibitem{MDB15PRE}
G.~Di~Molfetta, F.~Debbasch, and M.~Brachet, \emph{Nonlinear
  optical galton board: Thermalization and continuous limit}, Phys. Rev. E
  \textbf{92} (2015), 042923.



\bibitem{MorawetzStrauss1973}
C.~Morawetz and W.~Strauss, 
On a nonlinear scattering operator,
Comm. Pure Appl. Math. 26 (1973) 47--54.


\bibitem{NPR07PRA}
C.~Navarrete-Benlloch, A.~P\'erez, and E.~Rold\'an, \emph{Nonlinear
  optical galton board}, Phys. Rev. A \textbf{75} (2007), 062333.



  
\bibitem{RiSuTi17a}
S. Richard, A. Suzuki, and R. Tiedra de Aldecoa, 
Quantum walks with an anisotropic coin I: spectral theory, 
{\em Lett. Math. Phys.} (2017). 


\bibitem{RiSuTi17b} 
S. Richard, A. Suzuki, and R. Tiedra de Aldecoa, 
Quantum walks with an anisotropic coin II: scattering theory, 
arXiv:1801.02779. 





\bibitem{SWH14SR}
Y.~Shikano, T.~Wada, and J.~Horikawa, \emph{Discrete-time quantum walk with
  feed-forward quantum coin}, Sci Rep. \textbf{4} (2014), 4427.


\bibitem{Suzuki16QIP}
A.~Suzuki, \emph{Asymptotic velocity of a position-dependent quantum walk},
  Quantum Inf. Process. \textbf{15} (2016), no.~1, 103--119. 





\end{thebibliography}
\end{document}